\newtheoremstyle{adaptedtheoremstyle}
  {\topsep}
  {\topsep}
  {}
  {}
  {\itshape}
  {.}
  {.5em}
  {\thmname{#1}\thmnumber{ #2}\thmnote{ (#3)}}
\theoremstyle{adaptedtheoremstyle}
\newtheorem{theorem}{Theorem}
\newtheorem{definition}{Definition}
\newcommand{\BQP}{\textsc{BQP}}
\newcommand{\PromiseBQP}{\textsc{PromiseBQP}}
\newcommand{\Forrelation}{\textsc{Forrelation}}
\newcommand{\kForrelation}{$k$-\textsc{Forrelation}}
\newcommand{\oddkForrelation}{$\textrm{odd-}k$-\textsc{Forrelation}}
    \def\CT@@do@color{%
      \global\let\CT@do@color\relax
            \@tempdima\wd\z@
            \advance\@tempdima\@tempdimb
            \advance\@tempdima\@tempdimc
    \advance\@tempdimb\tabcolsep
    \advance\@tempdimc\tabcolsep
    \advance\@tempdima2\tabcolsep
            \kern-\@tempdimb
            \leaders\vrule
                    \hskip\@tempdima\@plus  1fill
            \kern-\@tempdimc
            \hskip-\wd\z@ \@plus -1fill }
\def\k1{k_1}
\def\k2{k_2}
\def\q1{q_1}
\def\q2{q_2}
\def\({\left (}
\def\){\right )}
\def\[{\left [}
\def\]{\right ]}
\newcommand{\beq}{\begin{equation}}
\newcommand{\eeq}{\end{equation}}
\DeclareMathAlphabet\mathbfcal{OMS}{cmsy}{b}{n}
\begin{document}

\date{\today}
\flushbottom \draft
\title{Universal expressiveness of variational quantum classifiers and quantum kernels for support vector machines
}
\author{Jonas J\"{a}ger\textsuperscript{1,2}* (\url{mail@jonas-jaeger.com}) and Roman V. Krems\textsuperscript{1,3}* (\url{rkrems@chem.ubc.ca})} 
\affiliation{
\textsuperscript{1}\,Department of Chemistry, University of British Columbia, Vancouver, B.C. V6T 1Z1, Canada \\
\textsuperscript{2}\,Department of Computer Science, Technische Universit\"{a}t Darmstadt, Darmstadt, Germany \\
\textsuperscript{3}\,Stewart Blusson Quantum Matter Institute, Vancouver, B.C. V6T 1Z4, Canada}

\begin{abstract}
Machine learning is considered to be one of the most promising applications of quantum computing. 
Therefore, the search for quantum advantage of the quantum analogues of machine learning models is a key research goal. 
Here, we show that variational quantum classifiers and support vector machines with quantum kernels can solve 
a classification problem based on the \kForrelation\ problem, which is known to be \PromiseBQP-complete.  Because the \PromiseBQP\ complexity class includes 
all Bounded-Error Quantum Polynomial-Time (\BQP) decision problems, our results imply that there exists a feature map and a quantum kernel that make variational quantum classifiers and quantum kernel support vector machines efficient solvers for any \BQP\ problem.
Hence, this work implies that their feature map and quantum kernel, respectively, can be designed to have a quantum advantage for any classification problem that cannot be classically solved in polynomial time but contrariwise by a quantum computer.\looseness=-1
\end{abstract} 

\maketitle

\section*{Introduction}

Quantum machine learning (QML) has recently emerged as a new research field aiming to take advantage of quantum computing for machine learning (ML) tasks \cite{Biamonte.2007,Ramezani.2020,Li.2020,schuld2021machine}. 
It has been shown that embedding data into gate-based quantum circuits can be used to produce kernels for ML models by quantum measurements \cite{havlivcek2019supervised,schuld2019quantum,schuld2021supervised,mengoni2019kernel,chatterjee2016generalized,glick2021covariant,blank2020quantum}.   
Quantum kernels have been used as kernels of support vector machines (QSVM) for classification \cite{park2020theory,park2020practical,suzuki2020analysis,bartkiewicz2020experimental,wu2021application,haug2021large,liu2021rigorous} and Gaussian process models for regression problems \cite{otten2020quantum,jun}. 
Variational quantum circuits have been used to devise variational quantum classifiers (VQC) \cite{havlivcek2019supervised,Blance.2021,Farhi.2018}.
However, for QML to become a new computational paradigm, it is necessary to prove and demonstrate the computational advantage of ML models based on quantum circuits.

Computational problems are classified in computational complexity theory according to the scaling of 
 time and memory requirements in a computational model with the problem size. 
For example, the classical complexity class  \textsc{P} encompasses all decision problems that are solvable on a deterministic Turing machine in time which scales polynomially with the problem size. 
Analogously, class \textsc{NP} can be defined to encompass problems solvable on a {non}-deterministic Turing machine in polynomial time.
Problems solvable in polynomial time are considered efficient. Hence, decision problems in \textsc{P} are efficiently solvable by classical computers, but it is assumed that this is not the case for problems in NP (\textsc{P} $\neq$ \textsc{NP}).
Problems can further be in special relations to complexity classes. A problem is {complete} relative to a complexity class, if every problem in this class  
can be reduced to this problem under an efficient transformation. 
Another relation is {hardness}. A hard problem relative to a complexity class is at least as difficult to solve as any problem in this class.
Importantly, this implies that hardness is a stronger property than completeness since a hard problem is also complete for a particular class, if it is in this class, but it can be in a hierarchically higher class.\looseness=-1

    \begin{figure}
                \includegraphics[width=0.49\linewidth]{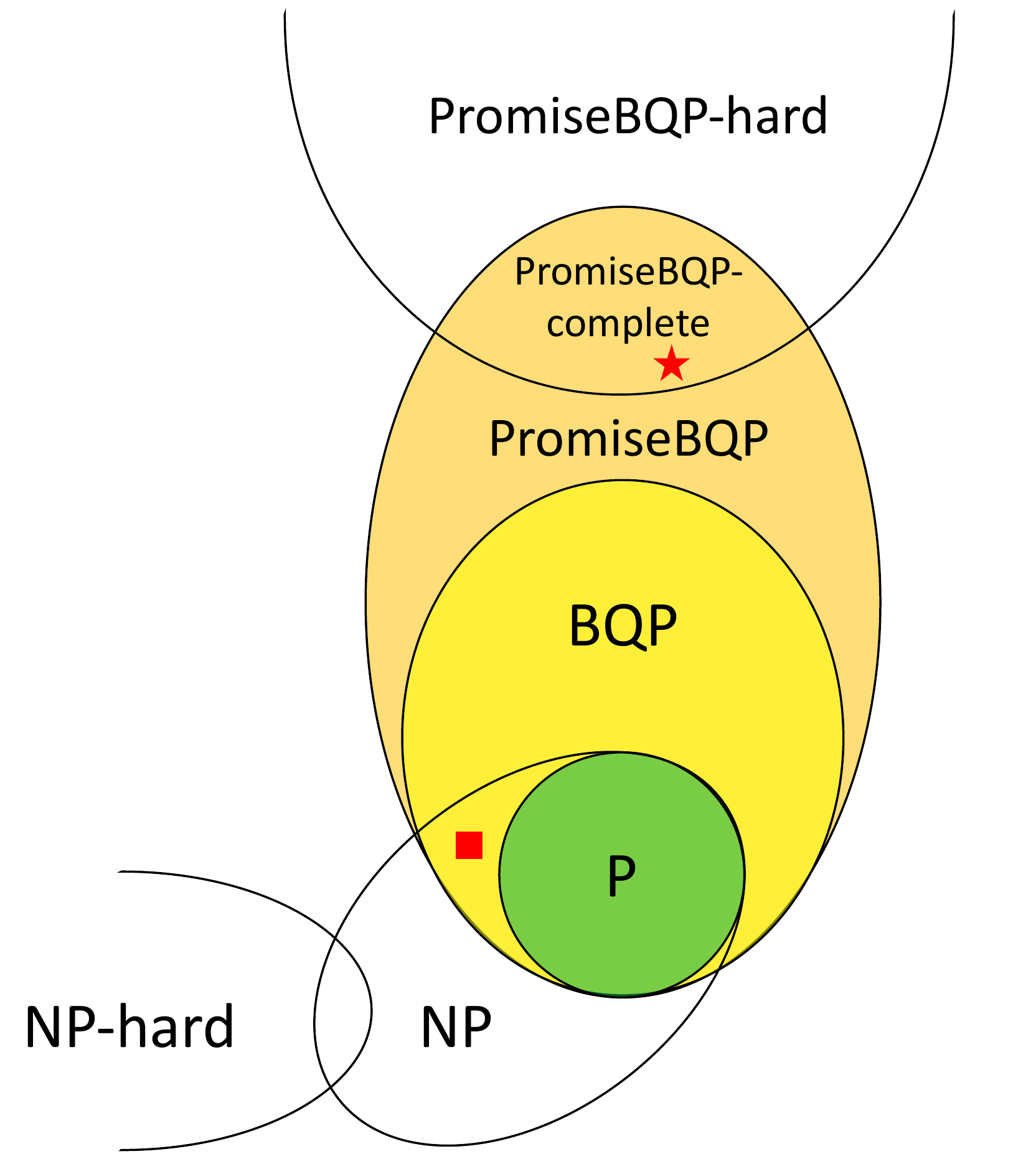}
            \caption{Hierarchy and relations of the complexity classes and problems relevant for this work. This includes the discrete logarithm decision problem $\mathrm{DLP}_{\sfrac{1}{2}}$ ({\color{red}red square}) and (explicit) \kForrelation\ promise problem ({\color{red}red star}). We use the following established, but not yet proven, assumptions: $\mathrm{DLP}_{\sfrac{1}{2}}$ in NP, P $\neq$ NP, P $\neq$ BQP ($\implies$ existence of quantum advantage), NP-complete is outside \BQP, (\textsc{Promise})BQP-complete is outside NP.}
\label{fig:CC}
        \end{figure}

Quantum computing problems are classified by quantum complexity theory \cite{Cleve.2000}. In particular, class 
  \BQP\  -- {bounded-error quantum polynomial time} -- encompasses decision problems solvable in polynomial time by a quantum Turing machine (the uniform family of polynomial-size quantum circuits), with at most 1/3 probability of error. 
While BQP includes P, because all efficient classical computations can be performed deterministically using quantum circuits with polynomial depth, BQP is assumed to also include problems that are not in P. 
This means that BQP-complete problems are not in P.  
Otherwise, BQP would be equal to P and  there would be no quantum advantage to any quantum computing algorithm. 
Thus,  (it is believed that) BQP-complete problems cannot be solved in polynomial time on a classical computer. The hierarchy and relations of complexity classes relevant for this work are shown in Fig. \ref{fig:CC}.

To demonstrate quantum advantage of QSVM, Liu et al. \cite{liu2021rigorous} considered the \textsc{Discrete Logarithm Problem} (DLP). The problem is to find the logarithm $x = \log_g y$ in a multiplicative group of integers modulo prime $p$ (denoted as $\mathbb{Z}_p^*$) for a generator $g$, i.e., such that $g^x \equiv y \pmod{p}$.
DLP is believed, but not rigorously proven, to be unsolvable with polynomial time in the number of bits $n = \lceil \log_2 p \rceil$ on a classical computer. Furthermore, only computing the most significant bit of $x = \log_g y$ for the $\tfrac{1}{2}+\tfrac{1}{\mathrm{poly}(n)}$ fraction of $x\in \mathbb{Z}_p^*$ is as hard as solving DLP \cite{liu2021rigorous,Blum.1984}. This forms a decision problem ($\mathrm{DLP}_{\sfrac{1}{2}}$), presumed to be in NP, which was adopted by Liu et al. \cite{liu2021rigorous} into a classification task to prove separation between QSVM and classical ML classifiers.
Given that $\mathrm{DLP}_{\sfrac{1}{2}}$ is in NP (as shown in Fig. \ref{fig:CC} by the square), it can be argued that $\mathrm{DLP}_{\sfrac{1}{2}}$ cannot be a BQP-complete problem \cite{Raz.2018}. 
Therefore, one cannot generalize the results of Liu et al. \cite{liu2021rigorous} to arbitrary problems in BQP. 

In the present work, we show that VQC and QSVM can solve a problem that is complete in a hierarchically higher class in relation to BQP -- namely, \PromiseBQP. 
As such, our results imply that there exists a quantum kernel or a feature map that  makes VQC and QSVM efficient solvers for any problem with \BQP\ complexity.

\section*{Results}

We use the \kForrelation\ problem that is proven to be \PromiseBQP-complete \cite{Aaronson.2014}.  As defined and described in detail in the Methods section, the \kForrelation\ problem considers $k$ Boolean functions $f_1, \ldots, f_k: \{0,1\}^n \rightarrow \{-1, 1\}$  yielding 
\begin{eqnarray}
         \Phi_{f_{1}, \ldots, f_{k}}  := \frac{1}{2^{(k+1) n / 2}} 
 \sum_{{x_{1}, \ldots, x_{k} \in\{0,1\}^{n}}} f_{1}\left(x_{1}\right)(-1)^{x_{1} \cdot x_{2}} \nonumber
 \\ f_{2}\left(x_{2}\right)(-1)^{x_{2} \cdot x_{3}}  \cdots(-1)^{x_{k-1} \cdot x_{k}} f_{k}\left(x_{k}\right)~~
\end{eqnarray}
with $x \cdot y = \sum_{i=1}^n x_iy_i$.  
We first introduce a classification problem based on the \kForrelation\ promise problem including a compact data encoding scheme. Correctly classifying such a data set requires an algorithm with \PromiseBQP-complete complexity. 

We then show that this classification problem can be solved efficiently and with arbitrary accuracy by both quantum-enhanced classification algorithms: VQC and QSVM, which are reviewed in detail in the Methods section.
Therefore, the resulting classification models solve the \kForrelation\ problem in the \PromiseBQP\ setting and can represent any algorithm to solve all \PromiseBQP\ problems. In other words, we show that these quantum-enhanced classification algorithms are of \PromiseBQP-complete expressive power.

        
\subsection*{\kForrelation\ classification data set} 
We formulate a classification problem with the same complexity as the \kForrelation\ problem. 
Generally, given a promise problem $\Pi = (\Pi_+, \Pi_-)$, one can obtain a data set $\mathcal{D} = \{ \bm{x_i},  y_i \}_{i\in \{1, \ldots, m\}}$ by encoding $m = m_+ + m_-$ instances from $\Pi$ into input vectors $\bm{x_i}$ where the $m_+$ instances sampled from $\Pi_+$ are labeled with class $y_i=+1$ whereas the $m_-$ instances sampled from $\Pi_-$ are labeled with class $y_i=-1$.
Deriving a data set based on the \kForrelation\ problem is not straightforward since the problem instances $\Pi_+ \cup \Pi_-$ consist of $k$-tuples of Boolean functions with $n$-bit inputs for which the description length to encode an instance generally grows exponentially in $n$. Specifically, an arbitrary $n$-bit Boolean function needs $2^n$ bits to encode the evaluation outcome for the $2^n$ possible inputs. Since a \kForrelation\ instance incorporates $k$ such functions, the resulting data set would have dimensionality $k 2^n$.

We use the restriction that each Boolean function $f_i$ depends on at most three input bits as allowed for \kForrelation\ to remain \PromiseBQP-complete as long as the condition is fulfilled that at least one function depends on exactly three bits \cite{Aaronson.2014}. More specifically, each function can be restricted to be either constant $f_i(x) = 1$ or of the form $f_i(x) = (-1)^{C_i(x)}$ where $C_i(x)$ is a product of at most three bits. This enables one to encode a \kForrelation\ instance using up to three indices per function $f_i$ indicating the input bits involved in the product $C_i(x)$ or none indicating the constant function $f_i(x) = 1$. 
We propose an explicit and practically effective multi-hot encoding scheme. Each function $f_i$ can be represented by an $n$-dimensional binary vector where a 1 in the $j$-th component indicates that the $j$-th input bit $x_j$ is incorporated in the product $C_i(x)$. The constant function $f_i(x) = 1$ can be encoded as the zero vector. For example, with $n=3$, the $k=3$ Boolean functions $f_1(x) = (-1)^{x_1x_3}$, $f_2(x) = +1$ and $f_3(x) = (-1)^{x_2}$ would be encoded as $\bm{x} = (1, 0, 1, 0, 0, 0, 0, 1, 0)^\top$.
The resulting encoding of a \kForrelation\ instance and, therefore, the data dimensionality is $kn$, which is linear in $k$ and, since $k=\mathrm{poly}(n)$, polynomial in $n$ instead of exponential in $n$.

Aaronson and Ambainis \cite{Aaronson.2014} established  the quantum algorithm to solve the \kForrelation\ problem with a constant query complexity by encoding the Boolean functions $f_i$ into unitary transformations with $U_{f_i}|x\rangle = f_i(x) |x\rangle\; \forall x\in \{0,1\}^n $, which are diagonal in the computational basis, and applying them successively to the initial state $|0 \rangle^{\otimes n}$ with leading and subsequent Hadamard gates ($H$). The full quantum circuit can be represented as
\begin{eqnarray}
U_F = H^{\otimes n} U_{f_k} H^{\otimes n} \ldots H^{\otimes n} U_{f_1} H^{\otimes n}.
\label{FQC}
\end{eqnarray}
 Note that $f_i (x) = 1$ produces an identity map $U_{f_i} = I$, while $f_i(x) = (-1)^{C(x)}$ with the product $C(x)$ comprising one, two and three bits induces  $Z$, controlled-$Z$ and controlled-controlled-$Z$ gates, respectively, which causes a relative phase-flip conditioned on the values of up to three qubits \cite{Nielsen.2010}.
In the final state $U_F |0^n\rangle$, $\Phi_{f_{1}, \ldots, f_{k}}$ is equal to the amplitude of state $|0\rangle^{\otimes n}$ and can be, therefore, estimated by measurements in the computational basis to decide the \kForrelation\ problem.

We use the feature map $|\Phi(\bm{x})\rangle = U_{\Phi(\bm{x})} |0\rangle^{\otimes n} = U_{F(\bm{x})} |0\rangle^{\otimes n}$ where $U_{F(\bm{x})}$ is 
defined by Eq. (\ref{FQC}) under the $k$ Boolean functions encoded in the data sample $\bm{x}$.
We show that when used for VQC and for kernel construction in QSVM, this feature map leads to classification models
that predict the correct class associated with the \kForrelation\ instance encoded in the data sample $\bm{x}$. This classification 
can be made arbitrarily accurate by increasing the number of measurements estimating  the probability of $|0\rangle^{\otimes n}$ and is perfect given the exact measurement probability.

\subsection*{\kForrelation\ training data}

We now show how to generate positive and negative training samples $\bm{x^+}$ and $\bm{x^-}$ of a classification problem for VQC and QSVM such that the quantum state $|\Phi(\bm{x^\pm})\rangle = U_{F(\bm{x^\pm})}|0\rangle^{\otimes n}$ produced by circuit (\ref{FQC}) in the feature map or quantum kernel corresponds to the positive class sample if all qubits are in state $|0 \rangle$ and the negative class sample if they are in another computational basis state $|z\rangle$ with $0 < z < 2^n$. 
To do this, we use the following theorem, which is proven in the Methods section:

    \begin{theorem}[\oddkForrelation]
    \label{thm:oddkForrelationPromiseBQPHard}~\\
        Explicit $k$-\textsc{Forrelation} remains \PromiseBQP-complete when $k$ is restricted to odd $k \geq 3$.
     \end{theorem}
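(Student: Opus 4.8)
The plan is to establish the two ingredients of \PromiseBQP-completeness separately: membership in \PromiseBQP\ and \PromiseBQP-hardness. Membership is inherited for free. The \oddkForrelation\ problem is a syntactic special case of explicit \kForrelation, and the latter lies in \PromiseBQP: given the functions $f_1,\ldots,f_k$ one builds the circuit $U_F$ of Eq.~(\ref{FQC}), whose gate count is polynomial in $n$ because every $U_{f_i}$ is a single $Z$, controlled-$Z$ or controlled-controlled-$Z$ gate (or the identity), and one estimates $\Phi_{f_1,\ldots,f_k}=\langle 0^n|U_F|0^n\rangle$ by sampling in the computational basis. The promise gap between yes- and no-instances then lets a bounded-error quantum machine decide the problem, and restricting $k$ to odd values only shrinks the set of admissible inputs, so membership is preserved.

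The substance of the theorem is hardness, and I would obtain it by revisiting the reduction that compiles an arbitrary \PromiseBQP\ computation into the alternating form of Eq.~(\ref{FQC}), rather than by post-processing a fixed instance of the unrestricted explicit \kForrelation\ problem \cite{Aaronson.2014}. The point is that the gate set consisting of the global Hadamard layer $H^{\otimes n}$ together with the diagonal gates $Z$, controlled-$Z$ and controlled-controlled-$Z$ is universal for \BQP, so any poly-size quantum circuit deciding a \PromiseBQP\ problem can be written as $U_F$ with $k=\mathrm{poly}(n)$ diagonal gates, at least one of which is a three-bit controlled-controlled-$Z$, and with $\langle 0^n|U_F|0^n\rangle$ reproducing the acceptance amplitude together with its promise gap. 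The only thing left to argue is that $k$ can always be forced to be odd and $\geq 3$ without disturbing this amplitude.

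The key step, and the one I expect to be the main obstacle, is this parity control. The naive idea of padding with a constant function fails: appending $f_{k+1}(x)=1$ does not preserve $\Phi_{f_1,\ldots,f_k}$ but instead collapses the two outermost Hadamard layers and yields $f_k(0)\,\Phi_{f_1,\ldots,f_{k-1}}$, i.e.\ it silently deletes a function; inserting a constant internally likewise forces two neighbouring summation variables to coincide and merges the adjacent functions rather than leaving the amplitude intact. The robust route is therefore to flip the parity at the level of the \emph{circuit} being compiled, before it is cast into the rigid Hadamard-alternating form: I would adjoin one fresh ancilla qubit prepared in $|0\rangle$ and apply a single $Z$ to it. Since $Z|0\rangle=|0\rangle$, this gate is harmless and changes neither the computation nor the acceptance amplitude, yet it contributes exactly one extra diagonal gate to the compiled circuit, flipping the parity of the total count. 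Compiling this padded circuit into the form of Eq.~(\ref{FQC}) then produces an explicit instance with $k$ of the desired odd parity, still $\geq 3$, still containing a three-bit function, and with the promise gap untouched.

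Finally I would check the bookkeeping that makes the reduction valid: that the padding is a polynomial-time classical transformation of the multi-hot instance description, that adjoining the ancilla keeps both $n$ and $k$ polynomial, and that the yes/no thresholds on $|\Phi|$ transfer verbatim because the relevant amplitude is unchanged. Together with the inherited membership, this shows that \oddkForrelation\ is \PromiseBQP-complete. The delicate point throughout is that the Hadamard-between-every-gate structure of Eq.~(\ref{FQC}) is so rigid that parity cannot be adjusted by local surgery on the function list itself, which is precisely why the adjustment must be made on the underlying circuit and only then recompiled.
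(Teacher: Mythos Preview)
Your diagnosis of the problem is accurate: membership is immediate, and the only real content is controlling the parity of $k$ while preserving $\Phi$, which cannot be done by naively inserting a constant function. However, your proposed fix has a genuine gap.

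The claim that adjoining an ancilla in $|0\rangle$ and applying a single $Z$ to it ``contributes exactly one extra diagonal gate to the compiled circuit, flipping the parity of the total count'' is not justified and is, in general, false. The Aaronson--Ambainis compilation from a generic \BQP\ circuit into the rigid form of Eq.~(\ref{FQC}) is not a gate-by-gate transcription: it must synchronize the Hadamards across \emph{all} qubits into global $H^{\otimes n}$ layers, and this synchronization is precisely what consumes the gadget of Theorem~25 in \cite{Aaronson.2014}, inserting several extra diagonal gates per parity mismatch. There is thus no simple relationship between the diagonal-gate count before and after compilation, so one extra $Z$ need not flip the parity of the resulting $k$. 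Worse, once the ancilla is present in the compiled circuit it receives a Hadamard at \emph{every} one of the $k{+}2$ layers, so it is not in $|0\rangle$ when your $Z$ acts; concretely, if one simply appends $f_{k+1}(x)=(-1)^{x_{n+1}}$ to an even-$k$ instance, the ancilla evolves as $HZH^{k+1}|0\rangle = HZH|0\rangle = X|0\rangle = |1\rangle$, which forces $\Phi=0$ and destroys the instance. Your closing remark that ``parity cannot be adjusted by local surgery on the function list itself'' is therefore even more true than you suggest---it also cannot be adjusted by a single uncontrolled gate on an ancilla.

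The paper avoids recompiling altogether and instead reduces directly from even-$k$ \kForrelation\ by post-processing the instance. It appends an odd block of $4\lceil n/2\rceil-1$ functions built from the identity
\[
H^{\otimes 2}\,\mathrm{CZ}\,H^{\otimes 2}\,\mathrm{CZ}\,H^{\otimes 2}\,\mathrm{CZ}\,H^{\otimes 2}\;=\;\mathrm{SWAP}\,H^{\otimes 2},
\]
applied successively to $\lceil n/2\rceil$ disjoint qubit pairs with constant-function spacers between gadgets. The net effect of the appended block is a product of SWAPs, which fixes $|0^n\rangle$ and hence preserves $\Phi$ exactly, while pushing $k$ to odd parity. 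This is the same gadget that the Aaronson--Ambainis compilation uses internally, but here it is invoked once, explicitly, at the \emph{instance} level rather than hidden inside an unspecified recompilation---which is exactly the step your argument leaves open.
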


\noindent


    First, we show how to obtain a positive sample $\bm{x^+}$ such that the initial state is preserved under circuit (\ref{FQC}), i.e., $U_{F(\bm{x^+})}|0\rangle^{\otimes n} = |0\rangle^{\otimes n}$. 
    For odd $k$ Boolean functions,  circuit (\ref{FQC})  includes $k+1$ Hadamard gates, an even number. For all $f_i(x) = +1$, the initial state is preserved since $U_{f_i} = \mathrm{I}$ and the resulting pairs of successive Hadamard gates annihilate. To fulfill the condition that at least one Boolean function must depend on exactly three bits, we choose, without loss of generality, the first and third Boolean functions to be $f_1(x) = f_3(x) = (-1)^{x_ix_jx_l}$. With this choice, 
 \begin{equation}
    \mathrm{H}^{\otimes n}\mathrm{U}_{f_3}\mathrm{H}^{\otimes n}\mathrm{I}\mathrm{H}^{\otimes n}\mathrm{U}_{f_1}\mathrm{H}^{\otimes n} = \mathrm{H}^{\otimes n}\mathrm{U}_{f_3}\mathrm{U}_{f_1}\mathrm{H}^{\otimes n} = \mathrm{I}
 \end{equation}
    since $f_1(x)f_3(x) = (-1)^{2x_ix_jx_l} = 1$. The positive sample $\bm{x^+}$ encoding these functions gives $U_{F(\bm{x^+})}|0\rangle^{\otimes n} = |0\rangle^{\otimes n}$.

    Second, we proceed with generating a negative sample $\bm{x^-}$ for which circuit (\ref{FQC}) maps $|0\rangle^{\otimes n}$ to a different computational basis state, i.e., $U_{F(\bm{x^-})}|0\rangle^{\otimes n} = |z\rangle$ with $0 < z < 2^n$.
    Observe that the unitary $U_{f_i}$ with $f_i(x) = (-1)^{x_j}$ implements a Pauli-$Z$ gate, which resolves to the Pauli-$X$ gate when sandwiched by Hadamard gates $HZH=X$. This flip in qubit $j$ transforms from the initial to another computational basis state $|z\rangle$ with $z_j = 1$. Without loss of generality, we fix $i=1$ and choose a subsequent $f_2(x)$ fulfilling the three-qubit dependence condition for \PromiseBQP-completeness so that all the following $k-1$ Hadamard gates, an even number, pairwise annihilate when the remaining $l>2$ functions are constant $f_l(x) = 1$. Thus, $f_2(x)$ might only cause a global phase-flip on $|z\rangle$, which can be ignored, and preserves the non-zero basis state of qubit $j$ such that $U_{F(\bm{x^-})}|0\rangle^{\otimes n} = |2^{j-1}\rangle \neq |0\rangle$.

\subsection*{Universal expressiveness of VQC}
We first present the proof for VQC. 
The VQC model \cite{havlivcek2019supervised} uses a feature map to encode the input data $\bm{x}$ into an $n$-qubit quantum state $|\Phi(\bm{x})\rangle = U_{\Phi(\bm{x})} |0\rangle^{\otimes n}$ followed by a parameterized quantum circuit $W(\bm{\theta})$. A decision rule, involving an additional bias term $b \in [-1,1]$, enables classification by estimating the binary measurement probability 
\begin{equation}
p_{\pm 1}(\bm{x}) = \langle\Phi(\bm{x})|W^{\dagger}({\bm{\theta}}) M_{\pm1} W({\bm{\theta}})| \Phi(\bm{x})\rangle
\end{equation}
to classify $\bm{x}$ as positive if
    \begin{equation}
    \label{eq:vqc_decision_rule}
        p_{+1}(\bm{x}) > \tfrac{1}{2}(1-b)
    \end{equation}
    or negative otherwise.
    
\begin{proof}
We use proof by reduction where our goal is to find the decision rule 
(\ref{eq:vqc_decision_rule})
to predict class +1 for each instance of the \kForrelation\ problem if and only if it is positive $\bm{x} \in  \Pi_+$.
We start with a data sample $\bm{x}$ that encodes the functions $f_1,\ldots, f_k$ and note that the choice of \kForrelation\ feature map $U_{\Phi(\bm{x})} = U_{F(\bm{x})}$, observable $M_{+1} = |0 \rangle^{\otimes n} \langle 0|^{\otimes n}$ and parameters $\bm{\theta}$ such that $W(\bm{\theta}) = I$ leads to 
\begin{align}
            p_{+1}(\bm{x}) 
= |\langle 0 |^{\otimes n} U_{F(\bm{x})} |0\rangle^{\otimes n}|^2
            = |\Phi_{f_1, \ldots, f_k}|^2.
\end{align}

 For the two possible classes for a data sample $\bm{x}$, two bounds to $b$ can be derived as follows:
\begin{itemize}
    \item If $\bm{x}$ belongs to class $+1$:\\
    $\Phi_{f_1, \ldots, f_k} \geq {3}/{5}$ holds and, therefore, $|\Phi_{f_1, \ldots, f_k}| \geq ({3}/{5})^2 = 9/25$, which, when inserted into the decision rule (\ref{eq:vqc_decision_rule}), yields
    \begin{equation}
    p_{+1}(\bm{x}) \geq \tfrac{9}{25} > \tfrac{1}{2}(1-b).
    \end{equation}
    This only holds if $b$ is chosen to be greater than $-7/25$.
    \item If $\bm{x}$ belongs to class $-1$:\\
    $\Phi_{f_1, \ldots, f_k} \leq {1}/{100}$ holds and, therefore, $|\Phi_{f_1, \ldots, f_k}| \leq ({1}/{100})^2 = 1/10000$. As the decision rule (\ref{eq:vqc_decision_rule}) must be violated, i.e., $p_{+1}(\bm{x}) < (1-b)/2$ for a negative sample $\bm{x}$, a second condition can be derived as
    \begin{equation}
    p_{+1}(\bm{x}) \leq \tfrac{1}{10000} < \tfrac{1}{2}(1-b).
    \end{equation}
    This only holds if $b$ is chosen to be less than $4999/5000$.
\end{itemize}

Thus, the {VQC} decision rule (\ref{eq:vqc_decision_rule}) with the choice of $b \in \left(\tfrac{7}{25}, \tfrac{4999}{5000}\right)$ 
 decides the \kForrelation\ problem.
The existence of values of $\bm{\theta}$ and especially $b$ that allows separation of the two classes was not a priori guaranteed. The demonstration of their existence ensures that VQC has \PromiseBQP-complete expressive power. 
We note again that the transformation from \kForrelation\ to {VQC} is polynomial in time.
\end{proof}

\subsection*{Universal expressiveness of QSVM}

We now present the proof for QSVM.
 The QSVM approach uses a quantum computer to estimate the kernel function 
 \begin{align}
    k(\bm{x_i}, \bm{x_j}) = |\langle \Phi(\bm{x_i}) | \Phi(\bm{x_j}) \rangle|^2 = |\langle 0|^{\otimes n} U^\dagger_{\Phi(\bm{x_i})} U_{\Phi(\bm{x_j})}|0\rangle^{\otimes n}|^2
\end{align}
 which is then used when solving the SVM dual problem \cite{havlivcek2019supervised} classically:
    \begin{align}
    \underset{\bm{\alpha}}{\mathrm{maximize}}\quad &\sum_{i=1}^m \alpha_{i}-\frac{1}{2} \sum_{i=1, j=1}^m \alpha_{i} \alpha_{j} y_{i} y_{j} k(\bm{x_i},  \bm{x_j}) \label{eq:qke_svm_max_probl}\\
    \mathrm{s.t.} \qquad &0 \leq \bm{\alpha} \leq C, \qquad   0 = \sum_{i=1}^m \alpha_{i} y_{i}. \label{eq:qke_svm_constraints}
    \end{align}
    The decision rule for an unseen (test) data sample $\bm{s}$, involving an additional bias term $b \in [-1,1]$, is then
    \begin{equation}
    \label{eq:qke_classification_mapping}
    {m}(\bm{s})=\operatorname{sign}\left(\sum_{i = 1}^m \alpha_{i} y_{i} k(\bm{x_i}, \bm{s})+b\right).
    \end{equation}

\begin{proof}
   We use proof by reduction to show that QSVM can have \PromiseBQP-complete expressive power. 
 The constraints of the dual optimization problem in Eq. (\ref{eq:qke_svm_constraints})
imply that at least two training samples, one from each class, must be provided. 
Therefore, we consider $m=2$ training samples and choose the positive training sample $\bm{x_1} = \bm{x^+}$ with $y_1 = +1$ and the negative training sample $\bm{x_2} = \bm{x^-}$ with $y_2 = -1$ as defined above.
 The equality constraint in Eq. (\ref{eq:qke_svm_constraints}) yields
\begin{equation}
        0 = \alpha_1 y_1 + \alpha_2 y_2 = \alpha_1 - \alpha_2 \iff \alpha_1 = \alpha_2.
\end{equation}
We set $\alpha = \alpha_1 = \alpha_2$, which simplifies the dual optimization problem to one-dimensional optimization constrained on the interval $0 \leq \alpha \leq C$. 
Since $[0, C]$ is a closed and bounded (i.e., compact) interval and the objective function is concave, 
the Weierstraß' extreme value theorem guarantees a maximum on this interval. 
We thus consider $\alpha$ to be the optimal solution, which is guaranteed to be non-negative and can be determined in closed-form in terms of the kernel function evaluated at the two training samples $k(\bm{x_1}, \bm{x_2})$.

As shown earlier, the two training samples produce $U_{F(\bm{x^+})}|0\rangle^{\otimes n} = |0\rangle^{\otimes n}$ and $U_{F(\bm{x^-})}|0\rangle^{\otimes n} = |z\rangle$ with $z \neq 0^n$ when the \kForrelation\ feature map using circuit (\ref{FQC}) is applied.
Under using the \kForrelation\ feature map to construct the kernel, the prediction mapping in Eq. (\ref{eq:qke_classification_mapping}) of QSVM for (test) data sample $\bm{s}$ can now be simplified as
 \begin{align}
            m(\bm{s}) 
            &= \mathrm{sign}\left( \alpha\left(k(\bm{x^+}, \bm{s}) - k(\bm{x^-},\bm{s})\right) + b \right) \\
            &= \mathrm{sign}\left( \alpha\left(|\langle 0^n | U_{F(\bm{s})}|0^n\rangle|^2 - |\langle z | U_{F(\bm{s})}|0^n\rangle|^2 \right) + b \right). \label{eq:qke_proof_classification_mapping_final}
\end{align}
        Here, the two required quantum kernel function estimates correspond to the probabilities to observe the bit-strings $0^n$ and $z$ in the state produced by the \kForrelation\ quantum circuit $U_{F(\bm{s})}|0\rangle^{\otimes n}$ upon measurement in the computational basis.
        
        For the two possible cases $\pm 1$ of a test sample $\bm{s}$, two bounds can be derived for the argument in Eq. (\ref{eq:qke_proof_classification_mapping_final}):
        \begin{itemize}
            \item If $\bm{s}$ belongs to class $+1$:\\
            The measurement probability $|\langle 0^n | U_{F(\bm{s})}|0^n\rangle|^2$ is the absolute squared forrelation quantity $|\Phi_{f_1, \ldots, f_k}|^2$ corresponding to the \kForrelation\ instance encoded in $\bm{s}$, which is $|\langle 0^n | U_{F(\bm{s})}|0^n\rangle|^2 \geq ({3}/{5})^2$ in this case. Since the probabilities have to add up to one, every other $n$-bit bit-string $z \neq 0^n$ can only be observed with a probability of at most $1 - (3/5)^2 = 16/25$, i.e., $|\langle z | U_{F(\bm{s})}|0^n\rangle|^2 \leq 16/25$.
            
            These observations yield a lower bound of
            \begin{equation}
            \alpha( \underbrace{|\langle 0^n | U_{F(\bm{s})}|0^n\rangle|^2}_{\geq (\frac{3}{5})^2} + \underbrace{(-1) |\langle z | U_{F(\bm{s})}|0^n\rangle|^2}_{\geq - 16/25} ) + b \geq   -\tfrac{7}{25}\alpha + b.
            \end{equation}
            Inserting this bound into $m(\bm{s})$, we see that it evaluates to $m(\bm{s}) = +1$ provided $b$ is chosen to be greater than $7\alpha/25$.
            \item If $\bm{s}$ belongs to class $-1$:\\
            Analogously to the previous case, it is known that  $|\langle 0^n | U_{F(\bm{s})}|0^n\rangle|^2 \leq ({1}/{100})^2$ and, therefore, $|\langle z | U_{F(\bm{s})}|0^n\rangle|^2 \geq 1 - ({1}/{100})^2 = 9999/10000$ for any $z \neq 0^n$. Then, the upper bound is
            \begin{equation}
            \alpha( \underbrace{|\langle 0^n | U_{F(\bm{s})}|0^n\rangle|^2}_{\leq (\frac{1}{100})^2} + \underbrace{(-1) |\langle z | U_{F(\bm{s})}|0^n\rangle|^2}_{\leq - (\frac{9999}{10000})} ) + b \leq -\tfrac{4999}{5000}\alpha + b,
            \end{equation}
            and $b$ must be smaller than ${4999}\alpha/{5000}$, which then guarantees that  $m(\bm{s})$ in  Eq. (\ref{eq:qke_proof_classification_mapping_final}) evaluates to -1.
        \end{itemize}
            
        Thus, setting $b \in \left(\tfrac{7}{25}\alpha, \tfrac{4999}{5000}\alpha\right)$ guarantees the correct evaluation of the classification mapping $m(\bm{s})$ for both cases. 
        Again, the existence of $b$ that yields the SVM separating the two classes was not a priori guaranteed. That such an interval exists ensures that QSVM has \PromiseBQP-complete expressive power. 
        \end{proof}

\subsection*{\kForrelation\ fixed ansatz}

    Finally, we show that circuit (\ref{FQC}) used in the feature map or quantum kernel can be implemented using a parameterized quantum circuit with a fixed ansatz, which is typically used in QML. 
    With a single Boolean function $f_i$ in the multi-hot encoding $\bm{x}$,
    the indices $j \in  \lbrace 1, \ldots, n \rbrace$ where $x_j = 1$ determine the target and control qubits of $Z$ gates. 
    To obtain a fixed ansatz, all possible qubit combinations to apply $Z$ gates, controlled-$Z$ gates and controlled-controlled-$Z$ gates in (\ref{FQC}) need to be covered. There are $\binom{n}{1} = n  \in \mathcal{O}(n)$, $\binom{n}{2} = {n(n-1)}/{2} \in \mathcal{O}(n^2)$, $\binom{n}{3} = {n(n-1)(n-2)}/{6} \in \mathcal{O}(n^3)$ possible qubit choices, respectively, due to the gate symmetry \cite{Nielsen.2010}. Instead of a (controlled-) $Z$ gate, a (controlled) rotation about the $Z$ axis $R_Z(\lambda)$ by angle parameter $\lambda$ can be applied as it is equivalent to identity if $\lambda = 0$ and to the (controlled-) $Z$ gate if $\lambda = \pi$. 
    For a controlled rotation gate applied to $J\subseteq \lbrace 1, \ldots, n \rbrace$ qubits, the sample $\bm{x}$ determines $\lambda$ as 
    \begin{equation}
        \lambda = \pi \prod_{j\in J} x_j \prod_{l\in \lbrace 1, \ldots, n\rbrace \setminus J} (1 - x_l)
    \end{equation}
    which gives $\lambda = 0$ in all (controlled) rotation gates except $\lambda = \pi$ for the one that implements $f_i$ encoded in $\bm{x}$.

    For $k$ functions, the fixed ansatz requires $\mathcal{O}(kn^3)$ gates.
    This shows that the expressiveness of VQC and QSVM proven here can be achieved using parameterized quantum circuits with fixed ansatz of polynomial depth since $k = \mathrm{poly}(n)$. This result is important considering that VQC and QSVM are generally implemented using circuits with fixed ansatz \cite{havlivcek2019supervised,schuld2019quantum,schuld2021supervised}.
    However, embedding the data directly through circuit (\ref{FQC}) by applying a single (controlled) $Z$ gate to the respective qubits, which is no longer a fixed ansatz, results in shallower circuits of depth $O(k)$.

\section*{Discussion}
The present work demonstrates that the feature map of VQC and the quantum kernels of QSVM can be used to solve the classification problem with the complexity of the \kForrelation\ problem that has previously 
been proven to be \PromiseBQP-complete. This means that it is possible to design the feature map of VQC and the quantum kernel of QSVM for any classification problem derived from any promise problem in \PromiseBQP. 
Because \PromiseBQP\ includes all decision problems in BQP as a special case, our results imply that it is possible to design the feature map of VQC and the quantum kernel of QSVM that solve any classification problem derived from any decision problem in BQP. 
If BQP $\neq$ \textsc{BPP} (classical bounded error probabilistic polynomial time), as required for exponential speed-up of quantum computing to exist, our results imply that VQC and QSVM must have quantum advantage over classical classifiers.

    According to Havl{\'\i}{\v{c}}ek \cite{havlivcek2019supervised}, every problem that can be solved by {VQC} can also be solved by {QSVM}, but the reverse does not generally hold. This connection is detailed by Schuld \cite{schuld2021supervised} and briefly outlined here. 
    {QSVM} can be seen as {VQC} with an optimal measurement, i.e., $W(\bm{\theta})$ with an optimal ansatz and parameters, since $W(\bm{\theta})$ effectively changes the measurement basis.
    Generally, a fixed ansatz in $W(\bm{\theta})$ requires $\mathcal{O}(2^{2^n})$ degrees of freedom to express arbitrary measurements.
    In QSVM, this reduces to an $m$-dimensional optimization problem as -- in the SVM dual view -- measurements ($\leftrightarrow$ separating hyperplane) become expansions in the training data ($\leftrightarrow$ support vectors). Due to the concavity in Eq. (\ref{eq:qke_svm_max_probl}), this is optimally solved given the kernel values $k(\bm{x_i}, \bm{x_j})$ for all pairs of training data points. Therefore, QSVM is guaranteed to find better or equally good solutions than VQC.
    In the present work, we show that both VQC and QSVM can solve a classification problem based on the \kForrelation\ problem, which implies that VQC and QSVM have an equivalent (universal) expressiveness from a computational complexity theory point of view.

\section*{Methods}  

\subsection*{Quantum-enhanced classification algorithms} 

Two most common, and related, approaches to solving classification problems with quantum computers are VQC and QSVM \cite{havlivcek2019supervised}, schematically depicted in Fig. \ref{fig:QKESVM}. The VQC model first uses a feature map to encode the input data $\bm{x}$ into an $n$-qubit quantum state by a unitary transformation of the initial state $|0\rangle^{\otimes n}$: $|\Phi(\bm{x})\rangle = U_{\Phi(\bm{x})} |0\rangle^{\otimes n}$. Subsequently, a parameterized quantum circuit $W(\bm{\theta})$ transforms the states to enable classification by a quantum measurement. The parameters $\bm{\theta}$ and an additional bias term $b \in [-1,1]$ are learned by classical optimization. A binary measurement probability 
\begin{equation}
p_{\pm 1}(\bm{x}) = \langle\Phi(\bm{x})|W^{\dagger}({\bm{\theta}}) M_{\pm1} W({\bm{\theta}})| \Phi(\bm{x})\rangle
\end{equation}
is estimated to classify $\bm{x}$ as positive if
    \begin{equation}
    \label{eq:vqc_decision_rule-a}
        p_{+1}(\bm{x}) > \tfrac{1}{2}(1-b)
    \end{equation}
    or as negative otherwise
        under choosing two projectors
    \begin{equation}
        M_{\pm}=\frac{1}{2}\left(I \pm \sum_{z=0}^{2^n-1} h_z |z \rangle \langle z |\right)
    \end{equation}
    with arbitrary but fixed coefficients $h_z \in \lbrace -1, 1\rbrace$.
    
    \begin{figure}
            \centering
            \includegraphics[width=\linewidth]{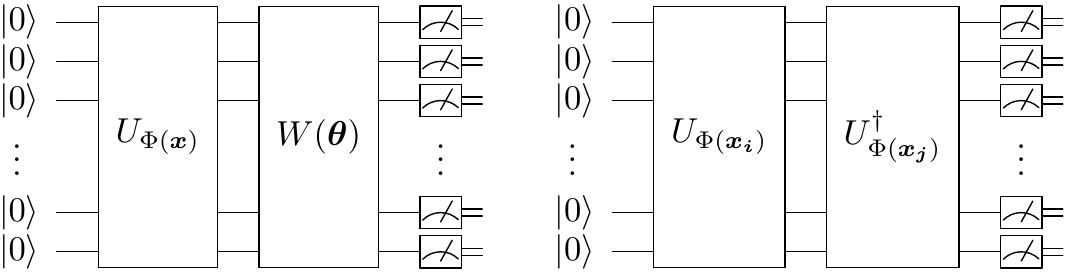}\\\vspace{-0.35cm}
            \null \hspace{1.8cm} \textbf{a} \hfill \textbf{b} \hspace{1.9cm}\null\vspace{-0.2cm}
            \caption{Quantum circuits used in the quantum-enhanced classification algorithms. Diagrams of quantum circuits for \textbf{(a)} variational quantum classifiers ({VQCs}) and \textbf{(b)} quantum kernel support vector machines ({QSVMs}).}
\label{fig:QKESVM}
        \end{figure}
    
 The QSVM approach uses a quantum computer to estimate the kernel function $k(\bm{x_i}, \bm{x_j})$ that is then used in  the dual problem \cite{havlivcek2019supervised}:
    \begin{align}
    \underset{\bm{\alpha}}{\mathrm{maximize}}\quad &\sum_{i=1}^m \alpha_{i}-\frac{1}{2} \sum_{i=1, j=1}^m \alpha_{i} \alpha_{j} y_{i} y_{j} k(\bm{x_i},  \bm{x_j}) \label{eq:qke_svm_max_probl-a}\\
    \mathrm{s.t.} \qquad &0 \leq \bm{\alpha} \leq C, \qquad   0 = \sum_{i=1}^m \alpha_{i} y_{i}. \label{eq:qke_svm_constraints-a}
    \end{align}
    The optimal solution is obtained classically by efficient quadratic optimization and determines the classification mapping of a (test) data sample $\bm{s}$ as
    \begin{equation}
    \label{eq:qke_classification_mapping-a}
    {m}(\bm{s})=\operatorname{sign}\left(\sum_{i = 1}^m \alpha_{i} y_{i} k(\bm{x_i}, \bm{s})+b\right).
    \end{equation}
     Fig. \ref{fig:QKESVM} depicts the quantum circuit to obtain the kernel function
    \begin{align}
        k(\bm{x_i}, \bm{x_j}) = |\langle \Phi(\bm{x_i}) | \Phi(\bm{x_j}) \rangle|^2 = |\langle 0|^{\otimes n} U^\dagger_{\Phi(\bm{x_i})} U_{\Phi(\bm{x_j})}|0\rangle^{\otimes n}|^2
    \end{align}
    as the measurement probability of the $0^n$ bit-string.
        
    \subsection*{Forrelation} 
    
    The complexity classes such as P or BQP are for decision problems with inputs necessarily belonging to `+' or `--' instances. 
  If inputs include a set that corresponds to neither `+' nor `--', the decision problems are generalized to become promise problems \cite{Goldreich.2006}. 
To make decisions, promise problems consider only inputs from the subsets corresponding to the `+/--' instances  (i.e. inputs that are promised to lead to a `+' or '--' decision). 
    
An example of a promise problem is the \Forrelation\ problem introduced in Aaronson \cite{Aaronson.2010}, and refined and extended in Aaronson and Ambainis \cite{Aaronson.2014}.
        This problem considers two Boolean functions $f, g: \{0,1\}^n \rightarrow \{-1, 1\}$ where the domain $\{0,1\}^n$ contains all $2^n$ $n$-bit strings, i.e., the integers from $0$ to $(2^n-1)$ in decimal representation. The quantity 
        \begin{equation}
        \Phi_{f,g} := \frac{1}{2^{3n/2}}\sum_{x,y\in \{0,1\}^n} f(x) (-1)^{x \cdot y} g(y)
        \end{equation}
        with $x \cdot y = \sum_{i=1}^n x_iy_i$
        determines the amount of correlation between $f$ and the Fourier transform of $g$, i.e., the \enquote{forrelation} of $f$ and $g$. Analogously to correlation, one can say that $f$ and $g$ are \enquote{forrelated} once the value $\Phi_{f,g}$ is large or not if it is small.

        The \Forrelation\ problem is solvable with a quantum algorithm \cite{Aaronson.2010} using a single query with error probability of ${2}/{5}$, which can be arbitrarily reduced by increasing the query complexity by a constant factor. Therefore, a quantum algorithm exists that solves the problem with error probability $\leq 1/3$ with a constant number of queries while the query implementing circuit remains polynomial, which makes it a \PromiseBQP\ problem \cite{Aaronson.2014}. As any decision problem is a trivial special case of a more general promise problem, the class of \PromiseBQP\ problems includes BQP entirely, as depicted in Fig.~\ref{fig:CC}.
                
        \subsection*{\kForrelation: a \PromiseBQP-complete extension} 
        
        Aaronson and Ambainis \cite{Aaronson.2014} extended the \Forrelation\ problem to the \kForrelation\ problem. Instead of two Boolean functions, $k$ Boolean functions $f_1, \ldots, f_k: \{0,1\}^n \rightarrow \{-1, 1\}$ are considered and the quantity 
\begin{eqnarray}
         \Phi_{f_{1}, \ldots, f_{k}}  := \frac{1}{2^{(k+1) n / 2}} 
 \sum_{{x_{1}, \ldots, x_{k} \in\{0,1\}^{n}}} f_{1}\left(x_{1}\right)(-1)^{x_{1} \cdot x_{2}} \nonumber
 \\ f_{2}\left(x_{2}\right)(-1)^{x_{2} \cdot x_{3}}  \cdots(-1)^{x_{k-1} \cdot x_{k}} f_{k}\left(x_{k}\right)~~
\end{eqnarray}
with $x \cdot y = \sum_{i=1}^n x_iy_i$
        leads to a promise problem:
        \begin{definition}[\kForrelation]~\\
            The promise problem $\Pi = (\Pi_+, \Pi_-)$ over the space of $k$ Boolean functions $\{0,1\}^n \rightarrow \{-1, 1\}$ with
            \begin{itemize}
                \item $\forall (f_1, \ldots, f_k) \in \Pi_+: \Phi_{f_1, \ldots, f_k} \geq \frac{3}{5}$
                \item $\forall (f_1, \ldots, f_k) \in \Pi_-: |\Phi_{f_1, \ldots, f_k}| \leq \frac{1}{100}$
            \end{itemize}
            is the \kForrelation\ problem. Here, $\Pi_\pm$ are the sets of $\pm$ problem instances with $\Pi_+ \cap \Pi_- = \varnothing$. 
        \end{definition}

\noindent
This definition generally allows the evaluation of the functions $f_1, \ldots, f_k$ by oracle queries. Furthermore, for {explicit} descriptions, which we assume in this work, Aaronson and Ambainis \cite{Aaronson.2014} proved the following theorem:
        \begin{theorem}[\PromiseBQP-completeness]\label{thm:kForrelationPromiseBQPHard}~\\
            If $f_1, \ldots, f_k$ are described explicitly (e.g., by circuits to compute them), and $k = \mathrm{poly}(n)$, then \kForrelation~is \PromiseBQP-complete.
        \end{theorem}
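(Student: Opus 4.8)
The plan is to prove completeness by establishing the two directions separately: membership, \kForrelation\ $\in$ \PromiseBQP, and hardness, that every \PromiseBQP\ promise problem reduces to \kForrelation\ under a polynomial-time transformation. Since $k=\mathrm{poly}(n)$ and the $f_i$ are given explicitly by polynomial-size circuits, every construction below stays polynomial.

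For membership I would use exactly the circuit $U_F$ of Eq.~(\ref{FQC}). Expanding the product $H^{\otimes n}U_{f_k}H^{\otimes n}\cdots H^{\otimes n}U_{f_1}H^{\otimes n}$ by inserting a computational-basis resolution of the identity between every adjacent pair of factors, and using $\langle x|H^{\otimes n}|y\rangle = 2^{-n/2}(-1)^{x\cdot y}$ together with $U_{f_i}|x\rangle = f_i(x)|x\rangle$, shows that $\langle 0^n|U_F|0^n\rangle = \Phi_{f_1,\ldots,f_k}$, so the probability of observing $0^n$ upon measuring $U_F|0^n\rangle$ in the computational basis equals $|\Phi_{f_1,\ldots,f_k}|^2$. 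Because the $f_i$ are explicit, $U_F$ is a polynomial-size circuit. The promise separates the two cases as $|\Phi|^2\ge 9/25$ on $\Pi_+$ versus $|\Phi|^2\le 1/10000$ on $\Pi_-$, a constant gap, so a constant number of repeated measurements followed by a threshold test decides the problem with error below $1/3$; this places \kForrelation\ in \PromiseBQP.

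For hardness I would reduce from an arbitrary \PromiseBQP\ problem decided by a uniform family of polynomial-size circuits $Q$. The core step is to rewrite $Q$ in the \emph{alternating normal form} $H^{\otimes n}D_k H^{\otimes n}\cdots H^{\otimes n}D_1 H^{\otimes n}$, in which every layer is either a global Hadamard or a diagonal $\pm 1$ gate, and then to split the $D_i$ so that each becomes a single $Z$, $CZ$, or $CCZ$ gate, i.e.\ a phase $(-1)^{C(x)}$ with $C$ a product of at most three bits. I would obtain such a form from the universality of the \emph{real} gate set $\{H,\mathrm{CCZ}\}$ (equivalently $\{H,\text{Toffoli}\}$), synthesizing each gate of $Q$ from these primitives and using identities such as $\mathrm{CNOT}_{jk}=H_k\,CZ_{jk}\,H_k$ to push all non-diagonal content into Hadamard layers, while padding idle layers with $D_i=\mathrm{I}$ so that surrounding global Hadamards cancel on the qubits one does not wish to act on. Because $H$ and $\mathrm{CCZ}$ are real orthogonal operations, the amplitude $\langle 0^n|Q|0^n\rangle$ is automatically real, matching the real-valued $\Phi$. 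Reading the $f_i$ off the diagonal layers and invoking the path-sum identity from the membership step identifies this amplitude with $\Phi_{f_1,\ldots,f_k}$ for $k=\mathrm{poly}(n)$ functions, each depending on at most three bits and at least one genuinely three-local. A standard \BQP\ error-reduction step applied before the encoding then amplifies the gap so that YES instances give $\Phi\ge 3/5$ and NO instances give $|\Phi|\le 1/100$, matching the promise.

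The hard part will be the normal-form conversion, specifically the passage from \emph{local} Hadamards to the \emph{global} layers $H^{\otimes n}$ that the forrelation template demands: $H^{\otimes n}$ acts on every qubit simultaneously, whereas a generic circuit applies Hadamards to selected qubits only. One must therefore realize each effective single-qubit Hadamard from global layers interleaved with $1$-, $2$-, and $3$-local phase gates, arranging the diagonal layers so that the unwanted Hadamards applied to idle qubits always occur in canceling pairs. Making this precise while keeping the number $k$ of layers polynomial, respecting the exact template of Eq.~(\ref{FQC}) with its leading and trailing Hadamards and its $\langle 0^n|\cdot|0^n\rangle$ amplitude, and checking that the constant promise gap survives amplification, is where essentially all the difficulty lies.
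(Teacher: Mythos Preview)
The paper does not prove this theorem at all: it is stated as Theorem~\ref{thm:kForrelationPromiseBQPHard} and attributed to Aaronson and Ambainis~\cite{Aaronson.2014}, with no argument given beyond the citation. So there is nothing in the paper to compare your proposal against directly.

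That said, your sketch is faithful to the Aaronson--Ambainis proof, and your identification of the hard step is exactly right. In fact the paper does supply, in a different place, the concrete gadget you are missing: in the proof of Theorem~\ref{thm:oddkForrelationPromiseBQPHard} (the \oddkForrelation\ result) the identity
\[
\mathrm{H}^{\otimes 2}\;\mathrm{CZ}\;\mathrm{H}^{\otimes 2}\;\mathrm{CZ}\;\mathrm{H}^{\otimes 2}\;\mathrm{CZ}\;\mathrm{H}^{\otimes 2}\;\equiv\;\mathrm{SWAP}\;\mathrm{H}^{\otimes 2}
\]
is quoted from~\cite{Aaronson.2014}. This is precisely the mechanism that lets one simulate a \emph{local} Hadamard with \emph{global} Hadamard layers interleaved with $2$-local diagonal phases: applying it to an ancilla paired with the target qubit trades a parity change in the number of $H^{\otimes 2}$ layers for a SWAP, which leaves $|0^n\rangle$ amplitudes unaffected. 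Had you spotted this gadget in the paper, the part you flagged as ``where essentially all the difficulty lies'' would already be resolved. Otherwise your membership argument via the path-sum expansion of $U_F$ and your hardness outline via the real gate set $\{H,\mathrm{CCZ}\}$ with error amplification are correct and standard.
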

\noindent
and also showed that this still holds when the functions are restricted to depend on at most three input bits of the form $f_i(x) = (-1)^{C_i(x)}$ where $C_i(x)$ is a product of at most 3 input bits, or be chosen constant $f_i(x) = 1$, while at least one $f_i(x)$ must depend on exactly 3 bits in $x$. 
Note the crucial difference: \kForrelation\ (under the stated conditions) is not only a \PromiseBQP\ problem but a \PromiseBQP-{complete} problem.

\subsection*{\oddkForrelation}

Theorem \ref{thm:oddkForrelationPromiseBQPHard} is used for the construction of the data set in the present work. It is restated and proven in the following:

\setcounter{theorem}{0}
    \begin{theorem}[\oddkForrelation]
    \label{thm:oddkForrelationPromiseBQPHard-a}~\\
        Explicit $k$-\textsc{Forrelation} remains \PromiseBQP-complete when $k$ is restricted to odd $k \geq 3$.
     \end{theorem}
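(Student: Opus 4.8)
The plan is to prove the two halves of \PromiseBQP-completeness separately. Containment in \PromiseBQP\ is immediate: \oddkForrelation\ is merely the syntactic restriction of explicit \kForrelation\ to odd $k\geq 3$, and explicit \kForrelation\ already lies in \PromiseBQP; narrowing the admissible instances cannot push the problem out of the class, and the very same quantum procedure (estimating the amplitude of $|0\rangle^{\otimes n}$ in $U_F|0\rangle^{\otimes n}$) still decides it. All the work is therefore in hardness, where I would reduce an arbitrary \PromiseBQP\ problem to \oddkForrelation\ using that explicit \kForrelation\ is \PromiseBQP-complete (Theorem~\ref{thm:kForrelationPromiseBQPHard}). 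If the instance produced by that reduction already has odd $k\geq 3$, the identity reduction finishes the argument, so the crux is the case of even $k$.

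The first thing I would try is the tempting one-line fix: append a constant function $f_{k+1}\equiv 1$ to turn $k$ into $k+1$. Before trusting it I would compute its effect. Introducing the extra variable $x_{k+1}$ with the factor $(-1)^{x_k\cdot x_{k+1}}$ and summing it out by character orthogonality, $\sum_{x_{k+1}}(-1)^{x_k\cdot x_{k+1}} = 2^n\,\delta_{x_k,0^n}$, collapses the sum to $x_k=0^n$ and gives
\begin{equation}
\Phi_{f_1,\ldots,f_k,1} = f_k(0^n)\,\Phi_{f_1,\ldots,f_{k-1}},
\end{equation}
which is \emph{not} $\Phi_{f_1,\ldots,f_k}$. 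Equivalently, in the circuit picture $U_{f_{k+1}}=I$ makes two Hadamard layers collide, $H^{\otimes n} I H^{\otimes n}=I$, deleting the outer Hadamard and effectively changing the measurement basis. So naive padding does not preserve the value and cannot be used as is.

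I expect this to be the main obstacle, and I believe it reflects a genuine parity invariant rather than a defect of one particular pad: the circuit carries $k+1$ Hadamard layers, and the requirement that $|0\rangle^{\otimes n}$ return to itself so that $\Phi$ can be read off pins down the parity of that count. Every value-preserving edit I can think of changes $k$ by an \emph{even} amount — inserting a canceling pair of layers, or conjugating by an identity block of the form $H^{\otimes n}U_{f_3}H^{\otimes n}\,I\,H^{\otimes n}U_{f_1}H^{\otimes n}=I$ with $f_1f_3\equiv 1$ — so one cannot flip even $k$ to odd $k$ while holding $\Phi$ fixed. The resolution I would adopt is to generate odd $k$ from inside the reduction rather than by post-hoc padding. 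Concretely, I would start from the verifier of the target \PromiseBQP\ problem, amplify it so the amplitude of the all-zeros outcome is $\geq 3/5$ on yes-instances and $\leq 1/100$ on no-instances, and re-express this computational-basis-to-computational-basis circuit in the global-Hadamard/diagonal normal form underlying Theorem~\ref{thm:kForrelationPromiseBQPHard}, supplying the mandatory exactly-three-bit dependence by a single controlled-controlled-$Z$ layer. The point is that the promise windows $[3/5,1]$ and $[-1/100,1/100]$ are \emph{both} reachable by odd-$k$ circuits (the all-constant circuit already gives $\Phi=1$, and circuits such as $H^{\otimes n}ZH^{\otimes n}$-type layers give $\Phi=0$), so the parity invariant never blocks the amplified value from landing in the correct window. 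Parity of $k$ is controlled by the number of internal Hadamard layers, so I would fix it to odd either by the number of amplitude-amplification rounds or by inserting one extra Hadamard layer followed by a re-amplification step that keeps the value inside the promise.

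Finally I would verify the reduction is polynomial: the multi-hot encoding has size linear in $kn$, $k=\mathrm{poly}(n)$ remains odd and $\geq 3$, and the thresholds $3/5$ and $1/100$ hold by construction. Combined with containment, this yields \PromiseBQP-completeness of \oddkForrelation. The only delicate point is the parity obstruction above, which is precisely what rules out the tempting one-line proof; everything else is bookkeeping.
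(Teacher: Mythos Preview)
Your containment argument and your diagnosis of why appending $f_{k+1}\equiv 1$ fails are both correct and match the paper. The gap is in what comes next: your claimed ``parity invariant'' is not a theorem, and the paper's proof shows it is in fact false. There \emph{is} a value-preserving edit that changes $k$ by an odd amount. The key gadget, taken from the proof of Theorem~25 in Aaronson--Ambainis, is the two-qubit identity
\[
H^{\otimes 2}\,\mathrm{CZ}\,H^{\otimes 2}\,\mathrm{CZ}\,H^{\otimes 2}\,\mathrm{CZ}\,H^{\otimes 2} \;=\; \mathrm{SWAP}\cdot H^{\otimes 2},
\]
which trades one Hadamard layer for four at the cost of three $\mathrm{CZ}$ diagonals (each of the admissible form $f(x)=(-1)^{x_ix_j}$) and a SWAP. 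Applying it successively to $\lceil n/2\rceil$ disjoint qubit pairs, with constant functions inserted between consecutive gadgets, re-implements the final $H^{\otimes n}$ layer while appending exactly $4\lceil n/2\rceil-1$ functions, an odd number. SWAPs fix $|0\rangle^{\otimes n}$, so $\Phi$ is preserved exactly, and the even-$k$ instance becomes an odd-$k$ instance with the same promise status. This is precisely the post-hoc padding you declared impossible.

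Your fallback plan --- re-running the Aaronson--Ambainis hardness reduction from the verifier and ``controlling parity through the number of amplitude-amplification rounds or an extra Hadamard layer followed by re-amplification'' --- is not a proof as written. You never establish that amplitude amplification changes the layer count with controllable parity while keeping the diagonals in the restricted $(-1)^{C_i(x)}$ form, nor that the re-amplification after an extra $H^{\otimes n}$ lands back inside the promise windows. It may be salvageable, but it is strictly more work than the one-gadget reduction above, and the motivation for it (the supposed parity obstruction) evaporates once you have the $\mathrm{CZ}$--Hadamard identity.
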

     \begin{proof}
 By construction, \oddkForrelation\ is a special case of \kForrelation, which trivially implies that \oddkForrelation\ is in \PromiseBQP. 
        For \PromiseBQP-completeness, it remains to show that \oddkForrelation\ is \PromiseBQP-hard via a proof by reduction: we provide a polynomial mapping from every instance of \kForrelation\ to an instance of \oddkForrelation\ that preserves the forrelation value $\Phi$, which indicates that \oddkForrelation\ is at least as difficult as \kForrelation.
        
If $k$ is odd in an instance of \kForrelation, it is trivially an instance of \oddkForrelation.
If $k$ is even in an instance of \kForrelation, we add $4 \lceil{n/2}\rceil - 1$ Boolean functions resulting in odd $k + 4 \lceil{n/2}\rceil - 1$. The additional functions are chosen such that they are either constant $f(x) = +1$ or of the form $f(x) = (-1)^{x_i x_j}$ with $i,j \in \lbrace 1, \ldots, n \rbrace$, fulfilling the necessary conditions. We show that $\Phi_{f_{1}, \ldots, f_{k}} = \Phi_{f_{1}, \ldots, f_{k + 4 \lceil{n/2}\rceil - 1}}$ as follows.
        
The proof of Theorem 25 in Aaronson and Ambainis \cite{Aaronson.2014} uses a gadget applied to two qubits $i$ and $j$ with $i\neq j$ that converts an even number of $\mathrm{H}^{\otimes 2}$ gates into an odd number. Namely,
        \begin{equation}
        \mathrm{H}^{\otimes 2}\; \mathrm{CZ}\;\mathrm{H}^{\otimes 2} \; \mathrm{CZ}\;\mathrm{H}^{\otimes 2} \; \mathrm{CZ}\;\mathrm{H}^{\otimes 2} \equiv \mathrm{SWAP}\; \mathrm{H}^{\otimes 2}
        \end{equation}
        using three controlled-$Z$ gates ($\mathrm{CZ}$), which implement $f(x) = (-1)^{x_i x_j}$.
        We apply this gadget successively to $\lceil n/2 \rceil$ non-overlapping pairs of qubits to reproduce the final layer of Hadamard gates.
        The gadgets require $3 \lceil{n/2}\rceil$ $\mathrm{CZ}$ gates and $\lceil{n/2}\rceil - 1$ constant functions, so that every fourth of the additional functions produces an identity between two gadgets. In total, an odd number of Boolean functions $f_{k+1}, \ldots f_{k + 4 \lceil{n/2}\rceil - 1}$ is added.
  Obviously, this extends the problem instance from an even to an odd number of Boolean functions, while keeping the circuit equivalent (under SWAP operations) to the original one defined by even $k$ Boolean functions. In other words, the value $\Phi$ is preserved since SWAP operations do not affect the amplitude of $|0\rangle^{\otimes n}$.
  For the pairwise application of the $2$-qubit gadgets in the case of an odd number of qubits $n$, one can introduce an ancilla qubit in $|0\rangle$. The final result remains unaffected as this $(n+1)$-th qubit ends up in $|0\rangle$ and is, therefore, not entangled. 
\end{proof}

\subsection*{Data availability}
Data sharing not applicable to this article as no data sets were generated or analyzed during the current study.

\section*{Acknowledgements}

The work was supported by NSERC of Canada and the German Academic Exchange Service. 

\section*{Author contributions}
J. J. performed the derivations, analysis, and proofs. 
R. V. K. supervised the work. Both authors wrote the article.

\section*{Competing interests}
The authors declare no competing interests.

\clearpage
\newpage


\end{document}